\newtheorem{theorem}{Theorem}
\newtheorem{prop}{Proposition}
\newtheorem{lemma}{Lemma}
\newtheorem{corollary}{Corollary}[theorem]
\newcommand*{\rom}[1]{\expandafter\@slowromancap\romannumeral #1@}
\definecolor{highlightcolor}{RGB}{255,255,0}
\title {Addressing the Load Estimation Problem: Cell Switching in HAPS-Assisted Sustainable 6G Networks}
\begin{document}

 \author{\rm{Maryam Salamatmoghadasi}, \rm{Metin Ozturk}, \rm{Halim Yanikomeroglu}
 \thanks{This research has been sponsored in part by the NSERC Create program entitled TrustCAV and in part by The Scientific and Technological Research Council of Türkiye (TUBITAK).
 
All the authors are with Non-Terrestrial Networks Lab, Department of Systems and Computer Engineering, Carleton University, Ottawa, ON K1S5B6, Canada. Metin Ozturk is also with Electrical and Electronics Engineering, Ankara Yildirim Beyazit University, Ankara, 06010, Turkiye. emails: \texttt{maryamsalamatmoghad@cmail.carleton.ca, metin.ozturk@aybu.edu.tr, halim@sce.carleton.ca}.}}

\maketitle
\begin{abstract}
This study aims to introduce and address the problem of traffic load estimation in the cell switching concept within the evolving landscape of vertical heterogeneous networks (vHetNets).
The problem is that the practice of cell switching faces a significant challenge due to the lack of accurate data on the traffic load of sleeping small base stations (SBSs).
This problem makes the majority of the studies in the literature, particularly those employing load-dependent approaches, impractical due to their basic assumption of perfect knowledge of the traffic loads of sleeping SBSs for the next time slot.
Rather than developing another advanced cell switching algorithm, this study investigates the impacts of estimation errors and explores possible solutions through established methodologies in a novel vHetNet environment that includes the integration of a high altitude platform (HAPS) as a super macro base station (SMBS) into the terrestrial network.
In other words, this study adopts a more foundational perspective, focusing on eliminating a significant obstacle for the application of advanced cell switching algorithms.
To this end, we explore the potential of three distinct spatial interpolation-based estimation schemes: random neighboring selection, distance-based selection, and clustering-based selection.
Utilizing a real dataset for empirical validations, we evaluate the efficacy of our proposed traffic load estimation schemes. Our results demonstrate that the multi-level clustering (MLC) algorithm performs exceptionally well, with an insignificant difference (i.e., 0.8\%) observed between its estimated and actual network power consumption, highlighting its potential to significantly improve energy efficiency in vHetNets.
 
\end{abstract}
\begin{IEEEkeywords}
 HAPS, vHetNet, traffic load estimation, cell switching, power consumption, sustainability, 6G
\end{IEEEkeywords}
\section{Introduction}

With the advent of sixth-generation (6G) cellular networks, expected to support an unprecedented number of devices per square kilometer, the quest for enhanced connectivity is vital. 
However, this expansion faces significant challenges, including a notable increase in energy consumption within radio access networks~(RANs), with base stations~(BSs), particularly small BSs~(SBSs), being major contributors~\cite{3333333}.
Addressing these challenges requires re-evaluating current operational practices and implementing innovative solutions. 
In this regard, strategically deactivating BSs or putting them into sleep mode during low activity periods, i.e., cell/BS switching off~\cite{3333333}, emerges as a feasible solution for enhancing energy efficiency and network sustainability. 
Nonetheless, implementing effective cell switching strategies is impeded by the absence of precise traffic load data for sleeping SBSs at the next time slot, rendering existing energy optimization methods ineffective, since the majority of the works in the literature assume perfect knowledge on the traffic loads of sleeping BSs~\cite{10304250,HETS2023P,ELAA2022JR,EOMK2017JR,Metin_VFA_CellSwitch}.
Overcoming this hurdle by accurately estimating the traffic loads of sleeping SBSs is crucial, bridging the gap between theory and practice and unlocking the full potential of cell switching for significantly improving sustainability in vertical heterogeneous networks (vHetNets), which integrates satellites, high altitude platform stations (HAPS), unmanned aerial vehicles (UAVs), and terrestrial BSs to ensure comprehensive global coverage.
\textit{Therefore, in this study, instead of taking a small step forward and developing another advanced cell switching algorithm, we concentrate on removing the barrier between the state-of-the-art advanced techniques and their real-life implementations, because the majority of the existing algorithms in the literature are impractical due to the traffic load estimation problem.} 

Research on cell switching has extensively focused on more efficient BS deactivation strategies to reduce network energy consumption. For instance, the study in~\cite{10304250} examined a vHetNet model with a HAPS functioning as a super macro base station (SMBS), referred to as HIBS by International Telecommunication Union (ITU)~\cite{itu_vision_november}, alongside a macro BS (MBS) and several SBSs. It aimed to optimize sleep mode management of SBSs and utilize the capabilities of HAPS-SMBS to lower energy use while maintaining user quality-of-service (QoS).
Similarly, research in~\cite{HETS2023P} addressed cell switching challenges with HAPS-SMBS, specifically targeting traffic offloading from deactivated BSs using a sorting algorithm that prioritizes BSs with lower traffic loads for deactivation.
A tiered sleep mode system that adjusts sleep depth according to device activity was proposed in~\cite{ELAA2022JR}, featuring decentralized control for scalability and efficiency.
Additionally, the study in~\cite{EOMK2017JR} considered the control data separated architecture (CDSA) and implemented a genetic algorithm to optimize energy savings in HetNets by managing user associations and BS deactivation through deterministic algorithms.
Furthermore, the work in~\cite{Metin_VFA_CellSwitch} introduced a value function approximation (VFA)-based reinforcement learning (RL) algorithm for cell switching in ultra-dense networks, demonstrating scalable energy savings while maintaining QoS.

However, the challenge of traffic load estimation for sleeping BSs remains unaddressed, limiting the practical application of these studies and their contribution achieving sustainability goals of 6G networks as outlined in the 6G framework of ITU~\cite{itu_vision_june_23}.
Addressing this gap, our work is a pioneering effort to investigate and evaluate the impacts of traffic load estimation errors on the cell switching performance and explores various spatial interpolation methods for estimating the traffic load of sleeping SBSs, which is crucial for the viability and applicability of cell switching strategies. 
We conduct a comparative analysis of estimated and actual traffic loads and assess the implications of any discrepancies on network power consumption within a vHetNet~\cite{10304250}. 
Our study significantly contributes by:

\begin{itemize}
\item Introducing and exploring the problem of load estimation for sleeping SBSs for effective cell switching and traffic offloading in a vHetNet.
\item Implementing three established spatial interpolation algorithms---random neighboring selection, distance-based selection, and $k$-means clustering---for traffic load estimation. 
\item Developing a mathematical framework to understand the behaviors of implemented spatial interpolation techniques.
\item Utilizing a well-established telecommunication dataset based on
real-world call detail record (CDR) information in the city of Milan~\cite{DVN/EGZHFV_2015} to enhance the realism and reliability of our approach.
\item Evaluating the impact of traffic load estimation errors on network power consumption and decision changes in cell switching strategies.
\end{itemize}
\section{System Model}\label{sec:model}
\subsection{Network Model}
Our study explores a vHetNet depicted in Fig. \ref{fig-1}, comprising a macro cell with a single MBS and $n \in \mathbb{N}$ SBSs. Additionally, a HAPS-SMBS is integrated into the network, with the potential to serve multiple macro cells.
The primary function of SBSs is to deliver data services and address user-specific requirements, while MBS and HAPS-SMBS ensure consistent network coverage and manage control signals.
A key role of the HAPS-SMBS is to efficiently manage traffic offloading from SBSs during low-traffic periods, utilizing its extensive line-of-sight~(LoS) and large capacity~\cite{9380673}. 
This capability not only enhances network flexibility and optimizes capacity utilization but also offers a cost-effective alternative to deploying multiple terrestrial BSs, particularly in fluctuating and high-demand scenarios.

\begin{figure}[t]
\centering
\includegraphics[width = .6\columnwidth,trim={5cm 0 8.2cm 0},clip]{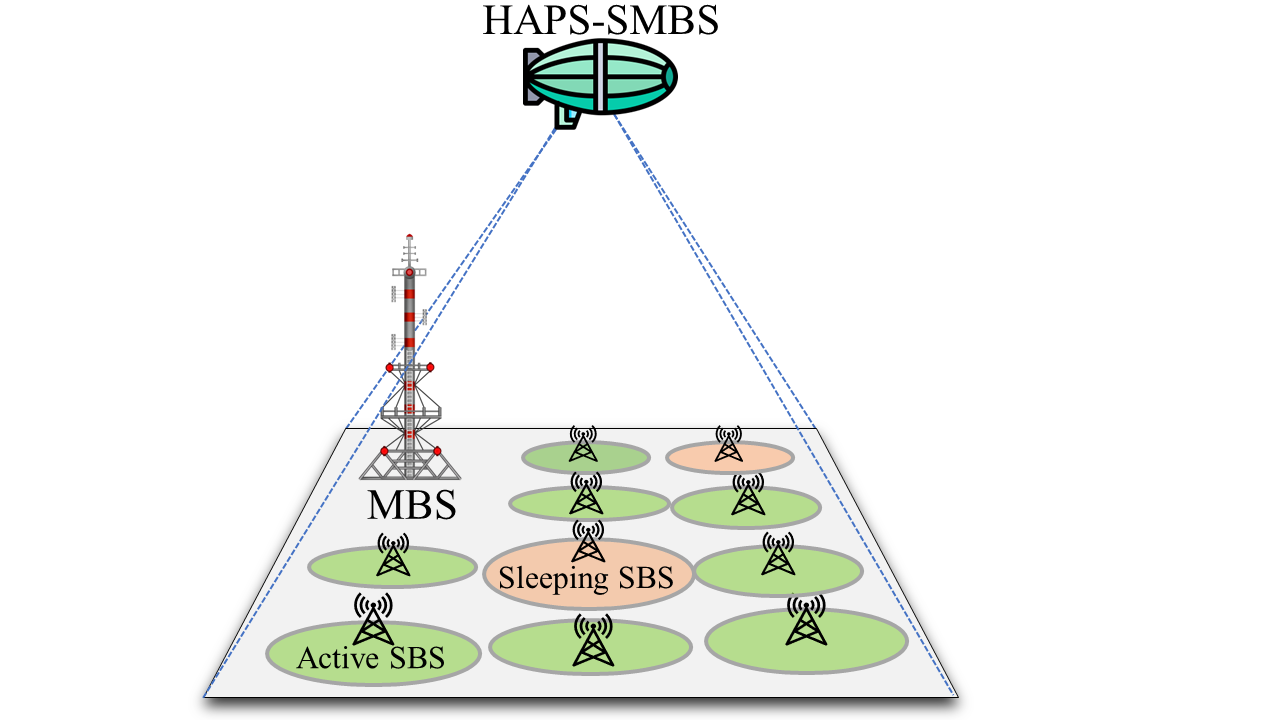}
\caption{A vHetNet model with an MBS, multiple SBSs, and a HAPS-SMBS.}
\label{fig-1}
\vspace{-1em}
\end{figure}
\subsection{Network Power Consumption}
The power consumption of each BS in the network is calculated based on the energy-aware radio and network technologies (EARTH) power consumption model~\cite{6056691}. 
For the $i$-th BS, the power consumption at any given time, denoted as $P_i$, is expressed as \cite{7925662}
\begin{equation}
P_i  = \left\{ {\begin{array}{*{20}c}
   {\begin{array}{*{20}c}
   {P_{\text{o},i}  + \eta _i \lambda _i P_{\text{t},i} }, & {0 < \lambda _i  < 1,}  
\end{array}}  \\
   {\begin{array}{*{20}c}
   {P_{\text{s},i} },  \; \;   \; \; \; \; \; \; \;\; \; \; \; \; \; \; \; \; \; \; \; \; \; & {\lambda _i  = 0},  \\
\end{array}}  \\
\end{array}} \right.
\label{eq1}
\end{equation}
where $P_{\text{o},i}$ represents the operational circuit power consumption, $\eta_i$ is the power amplifier efficiency, $\lambda_i$ is the load factor (i.e., the normalized traffic load), $P_{\text{t},i}$ is the transmit power, and $P_{\text{s},i}$ is the power consumption in sleep mode. 
The total instantaneous power consumption of vHetNet, denoted as $P_\text{T}$, is given by
\vspace{-1mm}
\begin{equation}
P_\text{T} = P_\text{H}  + P_\text{M}  + \sum\limits_{j = 1}^s {P_j }, 
\label{eq2}
\end{equation}
where $P_\text{H}$ and $P_\text{M}$ denote the power consumption of HAPS-SMBS and MBS at any given moment, respectively, which are calculated based on the $(0 < \lambda _i  < 1)$ case in \eqref{eq1} as HAPS-SMBS and MBS are always active in our modeling. 
Meanwhile, $P_j$ represents the power consumption of SBS $j$ and $s$ signifies the total number of SBSs within the network.
\vspace{-1mm}
\subsection{Data Set and Data Processing}
To assess power consumption as defined in~\eqref{eq1}, we require the load factor $\lambda_i$ for each BS. 
Thus, we employ a real call detail record (CDR) data set from Telecom Italia~\cite{DVN/EGZHFV_2015} that captures user activity in Milan, partitioned into 10,000 grids of $235\times 235$ meters. This activity includes calls, texts, and internet usage recorded every 10 minutes over November and December 2013. We consolidate these activities into a single measure of traffic load per grid. Each SBS is then assigned a normalized traffic load from a randomly selected grid to simulate its corresponding cell activity.
% \vspace{-1mm}
\section{Problem Formulation}\label{sec:problem}
\subsection{Optimization of the Total Power Consumption}
Our objective is to minimize the total power consumption of the vHetNet, $P_\text{T}$. We achieve this by identifying the most efficient power-saving state. The state vector ${\rm \Delta = \{
\delta _1 ,\delta _2 ,...,\delta _s \} } $ represents the status (ON or OFF) of each SBS at time $t$, where ${\delta _j  \in \left\{ {0,1} \right\}}$ indicates the state of the $j$-th SBS (0 for OFF, 1 for ON). It is assumed that MBS and HAPS-SMBS are always ON ($\delta _\text{M} = \delta _\text{H} = 1$). 
If SBS $j$ is switched off, i.e.,  when ${\delta _j}$ changes from 1 to 0 at time ${t}$, its traffic load becomes zero (i.e., $\lambda _{j,t}  = 0$) and either MBS or HAPS-SMBS can allocate the load, such that
\begin{equation}
\lambda _{k,t}  = \lambda _{k,(t - 1)}  + \phi _{j,k} \lambda _{j,(t - 1)}, \quad k=\{\text{M},~ \text{H}\}, 
\label{eq3}
\end{equation}
where $ {\lambda}_{k,t} $ is the load factor of MBS (i.e., $k=$ M) or HAPS-SMBS (i.e., $k=$ H), while $ {\lambda}_{j,t} $ is the load factor of the $j$-th ${\rm SBS}$. 
Additionally, ${\phi_{j,k}}$ is the relative capacity between $j$-th ${\rm SBS}$ and the MBS or HAPS-SMBS, such that ${\phi _{j,k}  = {{C_j } \mathord{\left/
 {\vphantom {{C_j } {C_k }}} \right.
 \kern-\nulldelimiterspace} {C_k }}}$. 
 On the other hand, when ${\delta _j }$ changes from 0 to 1, the MBS or HAPS-SMBS offloads some traffic to SBS $j$, such that ${\lambda _{j,t}  = {{\tau _j } \mathord{\left/
 {\vphantom {{\tau _j } {C_j }}} \right.
 \kern-\nulldelimiterspace} {C_j }}}$ and
\begin{equation}
 \lambda _{k,t}  = \lambda _{k,(t - 1)}  - \phi _{j,k} \lambda _{j,t},  \quad k=\{\text{M},~ \text{H}\}, 
 \label{eq5}
\end{equation}
where ${\tau _j }$ refers to the traffic load\footnote{Traffic load, $\tau$, represents the load of a BS, i.e., the amount of bandwidth occupied, while load factor, $\lambda$, denotes the normalized traffic load, which is computed by dividing the traffic load, $\tau$, to the capacity of the BS, $C$.
Since one can be computed from another, they can be used interchangeably throughout the paper.} of $j$-th ${\rm SBS}$. 
Accordingly, the optimization problem can be formulated as
 \begin{IEEEeqnarray*}{lcl}\label{eq:P1}
    &\underset{\mathbf{\Delta}}{\text{minimize}}\,\, & ~P_\text{T}(\Delta ) \,  \IEEEyesnumber \IEEEyessubnumber* \label{eq:P1_Obj}\\
    &\text{s.t.} & {\lambda _\text{M}}{\le 1,} \label{eq:P1_const1}\\
    && {\lambda _\text{H}}{\le 1,} \label{eq:P1_const2}\\
    && {{\delta _j \in \{ 0,1\}, }} \quad {{\; j = 1,...,s}}. \label{eq:P1_const3}
\end{IEEEeqnarray*}

Substituting \eqref{eq1} into \eqref{eq2}, and rearranging the equation as a closed-form expression with the inclusion of $\Delta$, we get
\begin{equation}
\begin{aligned}
P_\text{T} (\Delta )
 &={(P_{\text{o,H}}  + \eta _\text{H} \lambda _\text{H} } P_{\text{t,H}} )+{(P_{\text{o,M}}  + \eta _\text{M} \lambda _\text{M} } P_{\text{t,M}})\\
 &\hspace{-2em}\;\;\;\;\;\;\;+\Bigg(\sum\limits_{j = 1}^s {(P_{\text{o},j}  + \eta _j \lambda _j  P_{\text{t},j} )\delta _j +P_{\text{s},j}(1-\delta _j)\Bigg)}.
\end{aligned}
\label{eq7}
\end{equation}

In the constraints \eqref{eq:P1_const1} and \eqref{eq:P1_const2}, $\lambda_\text{M}=[0,1]$ and $\lambda_\text{H}=[0,1]$ are defined as positive real numbers, $\mathbb{R}^+$, ensuring that the operational capacities of both the MBS and HAPS-SMBS are never exceeded, thereby upholding the QoS requirements. 
Importantly, the optimal state vector, ${\Delta _{\text{opt}}}$, minimizes ${P_\text{T}(\Delta )}$, which is a function of the load factors, ${\lambda _j}$, of SBSs. 
It should be noted that the optimization model in \eqref{eq:P1} presumes complete knowledge of all BSs' traffic loads, including those in sleep mode. This presumption, common in the literature~\cite{10304250,HETS2023P,ELAA2022JR,EOMK2017JR,Metin_VFA_CellSwitch}, poses a pivotal question: How can we accurately determine the traffic load for a sleeping SBS to decide its next state? The lack of precise knowledge about ${\lambda _j}$ for these SBSs necessitates a reliable estimation method. Addressing this estimation challenge is essential for refining our optimization strategy, which we will explore in the subsequent section.
\subsection{Power Consumption with the Erroneous Load Estimations} 
Due to the unavailability of real-time traffic load data for the sleeping SBSs, we consider an estimated load, ${\hat \lambda}$, introducing a layer of uncertainty in the optimization process. Consequently, we redefine the network's total power consumption given in~\eqref{eq7} with the estimated load factors, $\hat \lambda$, as
\begin{equation}
\begin{aligned}
P_\text{est} (\Delta )
 &={(P_{\text{o,H}}  + \eta _\text{H} \hat \lambda _\text{H} } P_{\text{t,H}} )+{(P_{\text{o,M}}  + \eta _\text{M} \hat \lambda _\text{M} } P_{\text{t,M}})\\
 &\hspace{-2em}\;\;\;\;\;\;\;+\Bigg(\sum\limits_{j = 1}^s {(P_{\text{o},j}  + \eta _j \hat \lambda _j  P_{\text{t},j} )\delta _j +P_{\text{s},j}(1-\delta _j)\Bigg)}.
\end{aligned}
\label{eq8}
\end{equation}

Depending on the accuracy of our estimation, this could be either perfect or imperfect estimation. In the case of perfect estimation, where ${\lambda _j}={\hat \lambda _j},~\forall j$, the estimated power consumption, $P_\text{est}$, accurately captures the ground truth, $P_\text{T}$. 
Achieving perfect estimation is the goal of our work and should be the goal of any other work employing a cell switching scheme. 
% However, this need for estimation has not already been introduced in the literature. To the best of our knowledge, this is the first time we are introducing and addressing this gap between perfect knowledge and reality assumption. 

In the case of imperfect estimation, where ${\lambda _j} \neq {\hat \lambda _j},~\forall j$, there would be a probability of error, ${p_\text{err}}$, and we can define an expected value for the total power consumption of the network as
\begin{equation}  
E[P] = P_\text{est}.p_\text{err}  + P_\text{T}.(1 - p_\text{err}).
\label{eq9}
\end{equation}
% Considering the need for accurate traffic load estimation, the following theorem investigates two scenarios that illustrate the potential consequences of estimation inaccuracies on the network's power consumption optimization.
\begin{theorem}
Errors in estimating the traffic load of SBSs in a vHetNet can lead to changes in the optimal state vector, thereby affecting the total power consumption of the network.
\end{theorem}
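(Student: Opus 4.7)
The statement has two parts: (i) estimation errors \emph{can} change the optimal state vector $\Delta_{\text{opt}}$, and (ii) any such change propagates into the total power consumption $P_\text{T}$. Because the claim is existential (``can lead to'') rather than universal, my plan is to exhibit a concrete regime in which an estimation error shifts the optimizer, rather than try to quantify a worst-case impact. The argument proceeds by analyzing the discrete decision structure of the optimization problem~\eqref{eq:P1} and isolating decision thresholds at which the optimal switching choice flips.

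First, I would fix every network parameter except the load $\lambda_j$ of a single SBS~$j$, and compare the two candidate configurations $\delta_j = 0$ and $\delta_j = 1$ while holding the other $\delta_i$ at their optimal values. Using the offloading relations~\eqref{eq3}--\eqref{eq5} together with the closed form~\eqref{eq7}, each configuration yields a value of $P_\text{T}$ that is affine in $\lambda_j$; subtracting the two exposes a critical value $\lambda_j^{\ast}$ at which the preferred $\delta_j$ changes. In parallel, the QoS constraints~\eqref{eq:P1_const1}--\eqref{eq:P1_const2} impose a hard feasibility threshold: if the true load satisfies $\lambda_{k,(t-1)} + \phi_{j,k}\lambda_j > 1$ for both $k \in \{\text{M},\text{H}\}$, then setting $\delta_j = 0$ is infeasible even though an underestimated $\hat\lambda_j$ may declare it optimal. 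Either threshold suffices to drive the argument.

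Next, I would construct an explicit perturbation across such a threshold: choose the true load $\lambda_j$ and the estimated load $\hat\lambda_j$ so that $\lambda_j < \lambda_j^{\ast} < \hat\lambda_j$ (or the reverse). The estimated problem then selects the opposite value of $\delta_j$ from the true problem, so $\Delta_{\text{opt}}$ and $\hat\Delta_{\text{opt}}$ differ in at least one coordinate. Substituting $\hat\Delta_{\text{opt}}$ into the true objective~\eqref{eq7} then gives $P_\text{T}(\hat\Delta_{\text{opt}}) \geq P_\text{T}(\Delta_{\text{opt}})$ by optimality, with strict inequality whenever the two feasible configurations assign distinct power values, which is generic since the hardware-dependent coefficients in~\eqref{eq7} are independent. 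Comparing with~\eqref{eq8} and~\eqref{eq9} closes the loop by identifying the gap as the deviation introduced by the estimation error.

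The main obstacle I expect is not the construction itself but isolating it cleanly in the presence of cascading feasibility effects: switching one SBS off can push MBS or HAPS-SMBS past its capacity limit, which can in turn force other SBSs to remain on, coupling the decisions across~$j$. To avoid this coupling in the illustrative example, I would restrict the configuration to a regime in which only one SBS lies near a threshold and the aggregate loads $\lambda_\text{M}$ and $\lambda_\text{H}$ remain comfortably below unity after the perturbation, so that exactly one coordinate of $\Delta$ changes. With this localized witness in hand, both claims of the theorem follow directly.
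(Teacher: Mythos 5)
Your proposal is correct and follows essentially the same route as the paper: both arguments exhibit a decision threshold for a single SBS, observe that an estimate landing on the wrong side of it flips $\delta_j$ (your two directions of $\lambda_j \lessgtr \lambda_j^{\ast} \lessgtr \hat\lambda_j$ are exactly the paper's overestimation and underestimation scenarios), and quantify the resulting power gap as the difference between the active-SBS cost $(P_{\text{o},j} + \eta_j \lambda_j P_{\text{t},j})$ and the offloaded cost $(\eta_\text{H}\phi_{j,\text{H}}\lambda_j P_{\text{t,H}} + P_{\text{s},j})$. The only notable difference is that you derive the threshold endogenously from the affine structure of \eqref{eq7} and invoke the optimality inequality explicitly, whereas the paper simply posits a switching threshold $\lambda_\text{th}$ and frames the gap probabilistically via $p_\text{err}$ and expected values; your version is, if anything, the more rigorous rendering of the same idea.
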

\begin{proof}
We analyze the impact of traffic load estimation errors on network power consumption through two primary scenarios related to the operational dynamics of SBSs:
\begin{itemize} 
\item \underline{Overestimation Scenario}: If the estimated load ${\hat \lambda _j}$ at time $t_1$ for a sleeping SBS is more than the actual load and a set threshold $\lambda _\text{th}$, it might mistakenly trigger the SBS's transition to an active state, leading to unnecessary power consumption. The probability of this erroneous transition, $p_\text{err} ^{\text{off} \to \text{on}}$, is defined as
\begin{equation}
p_\text{err} ^{\text{off} \to \text{on}}  = \text{Pr}\{ \hat \lambda _j  > \lambda _\text{th}\mid \lambda _j  \le \lambda _\text{th} \}.
\label{eq10}
\end{equation}
The expected value of the associated power consumption error, $P_\text{err}$, due to this transition is calculated as
\begin{equation} 
\begin{array}{l}
E[P_\text{err}]  = \big|(\eta _\text{H} \phi _{j,\text{H}} \lambda _j P_\text{t,H}  + P_{\text{s},j} ) - (P_{\text{o},j} \\ + \eta _j \hat \lambda _j P_{\text{t},j} )\big|\times p_\text{err} ^{\text{off} \to \text{on}}. 
%\.pr\{ \lambda _{i,t_1 }  \le \lambda _\text{th} \} .pr\{ \lambda _{i,t_0}  \le \lambda _\text{th} \} 
\end{array}
\label{eq11}
\end{equation}
\item \underline{Underestimation Scenario}: Conversely, if the actual load of an SBS at time $t_1$ is higher than the estimated and it mistakenly remains in sleep mode, this can lead to insufficient traffic management. 
The probability of such underestimation, $p_\text{err} ^{\text{on} \to \text{off}}$, is defined as
\begin{equation}
p_\text{err} ^{\text{on} \to \text{off}}  = \text{Pr}\{\hat \lambda _j  < \lambda _\text{th}\mid \lambda _j  \ge \lambda _\text{th} \}.
\label{eq12}
\end{equation}
The expected value of the error in the total power consumption of the network, $P_\text{err}$, for this underestimation scenario is given by
\begin{equation} 
\begin{array}{l}
E[P_\text{err}]  = \big|(P_{\text{o},j}  + \eta _j \lambda _j P_{\text{t},j} )-(\eta _\text{H} \phi _{j,\text{H}} \hat\lambda _j P_\text{t,H} \\ + P_{\text{s},j} )\big|\times p_\text{err} ^{\text{on} \to \text{off}}.%\ .pr\{ \lambda _{i,t_1 }  \ge \lambda _\text{th} \} .pr\{ \lambda _{i,t_0}  \le \lambda _\text{th} \} 
\end{array}
\end{equation}
\label{eq13}
\end{itemize}
\vspace{-2mm}

These scenarios illustrate the significant impact that accurate load estimation of sleeping SBSs has on the efficient optimization of network power consumption, affirming the core assertion of this Theorem.
\end{proof}
\vspace{-2mm}
\section{Spatial Interpolation for Traffic Load Estimation}\label{sec:method}
We apply certain spatial interpolation methodologies for estimating the traffic load of sleeping SBSs, a critical factor in optimizing network power consumption. 
\vspace{-1mm}
\subsection{Geographical Distance-Based Traffic Load Estimation}
This method considers the proximity of neighboring cells to estimate the traffic load of a sleeping SBS. It includes two sub-methods based on the presence of a weighting mechanism, which is developed to prioritize the effect of closer cells.
\subsubsection{Distance-Based without Weighting} In this approach, the traffic load of a sleeping SBS is estimated by averaging the traffic loads of its neighboring cells, arranged incrementally based on proximity. All neighboring cells contribute equally to the estimation, regardless of their distance. The estimated traffic load of SBS $j$, $\hat\lambda _j$, is calculated as
\begin{equation}
    \hat\lambda _j  = \frac{1}{N}\sum\limits_{a = 1}^N {\lambda _a,} 
\label{eq14}
\end{equation}
where ${{\lambda_a}}$ represents the traffic load of cell $a$, and $N$ is defined as the number of cells included in the estimation process.
Note that usually $N\gg s$, as $N$ encompasses all the cells available for the estimation process, while $s$ is the number of SBSs within a single vHetNet with a single macro cell.
  
\subsubsection{Distance-Based with Weighting} This method refines the previous approach by assigning different weights to neighboring cells based on their distance from the sleeping SBS. The closer a cell is, the more it influences the estimated traffic load. The weighted traffic load, $\hat\lambda _j$, is calculated as
\begin{equation}
\hat \lambda _j  = {\sum\limits_{a = 1}^N {\lambda _a  \times w_{j,a} } }\Big/{{\sum\limits_{a = 1}^N {w_{j,a} } }},
\label{eq15}
\end{equation}
where the weighting factor, ${w_{j,a}}$, is defined as
\begin{equation} 
w_{j,a}  = \frac{{d_{\max } }}{{d_{j,a}^n }},\quad
n \in \mathbb{R}^+, 
\label{eq16}
\end{equation}
where ${d_{\max }}$ is the maximum distance between the sleeping SBS and its neighboring cells included in the estimation, and ${d_{j,a}}$ is the distance between the sleeping SBS ${j}$ and the neighboring SBS ${a}$.
\begin{theorem}
The error in estimating the traffic load of a sleeping SBS decreases as the exponent $n$, representing the power of the distance between the sleeping SBS and its neighboring cells, increases.
\label{theorem2}
\end{theorem}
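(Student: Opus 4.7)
The plan is to substitute the weight expression from \eqref{eq16} into the weighted estimator \eqref{eq15}, cancel the common factor $d_{\max}$ from numerator and denominator, and then analyze how the weight mass redistributes across neighboring cells as $n$ grows. Concretely, I would rewrite the estimator as
\begin{equation*}
\hat\lambda_j = \frac{\sum_{a=1}^N \lambda_a / d_{j,a}^n}{\sum_{a=1}^N 1/d_{j,a}^n},
\end{equation*}
and then normalize each term by the smallest distance $d_{\min} = \min_a d_{j,a}$, so that the effective weight on neighbor $a$ becomes $(d_{\min}/d_{j,a})^n$ and contributions from cells farther away decay geometrically in $n$.

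Next, I would examine the limit $n \to \infty$ and show that $\hat\lambda_j \to \lambda_{a^\star}$, where $a^\star$ denotes the nearest neighbor of SBS $j$. This is the classical behavior of generalized power means: as $n$ increases, the effective support of the weight distribution concentrates on the single closest cell, and for finite $n$ the estimator is a convex combination whose mass shifts monotonically toward that cell. A short calculation that differentiates the normalized weights with respect to $n$ formalizes this monotonic redistribution.

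The final step is to translate the weight redistribution into a statement about the estimation error $|\hat\lambda_j - \lambda_j|$. Here I would invoke the spatial correlation of cellular traffic (Tobler's first law of geography): the load of geographically nearer cells is more statistically similar to the load of SBS $j$. Under a mild Lipschitz-type assumption $|\lambda_a - \lambda_j| \le L\, d_{j,a}$ on the spatial variability of the underlying traffic process, the expected absolute error of $\hat\lambda_j$ is upper bounded by a weighted average of $L\, d_{j,a}$, which is non-increasing in $n$ because the weight shifts toward smaller distances, thereby yielding the claim.

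The hard part will be rigorously justifying the spatial-correlation assumption, since without it the claim fails in full generality: if the nearest neighbor happens to be an outlier, increasing $n$ amplifies rather than reduces the error. Accordingly, I would state the theorem under an explicit spatial-smoothness assumption on the traffic process and derive the monotonicity in $n$ from a direct comparison of $\mathbb{E}[|\hat\lambda_j - \lambda_j|]$ at exponents $n$ and $n+1$, using a rearrangement-type inequality on the reweighted sum.
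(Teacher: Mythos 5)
Your proposal follows essentially the same route as the paper: substitute \eqref{eq16} into \eqref{eq15}, cancel $d_{\max}$, argue that the normalized weights concentrate on the nearest cells as $n$ grows, and then convert that concentration into an error statement via the spatial correlation of traffic. The differences are in rigor rather than strategy, and two of them are genuine improvements. First, the paper analyzes $\lim_{n\to\infty} 1/d_{j,a}^n$ by a trichotomy at $d_{j,a}=1$, which is unit-dependent (distances measured in meters essentially never fall below $1$, and rescaling units changes which case applies); your normalization by $d_{\min}=\min_a d_{j,a}$, giving effective weights $(d_{\min}/d_{j,a})^n$ that converge to an indicator of the nearest neighbor, is the correct scale-invariant way to state the same concentration. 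Second, where the paper closes the argument by citing empirical studies (its Proposition~1) to assert that closer cells have more correlated loads, you make this an explicit Lipschitz-type hypothesis $|\lambda_a-\lambda_j|\le L\,d_{j,a}$ and bound the error by the $n$-weighted average of $L\,d_{j,a}$, which is non-increasing in $n$; this turns a heuristic into a conditional theorem and, importantly, makes visible the failure mode you flag (an outlier nearest neighbor), which the paper's version silently ignores. What the paper's looser argument buys is only brevity; your version buys a precise statement of the assumption under which the claim actually holds.
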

\begin{proof}
% If we define the estimation error for BS $i$ as 
% \begin{equation}\label{eq:est_err}
% \delta = |\hat\lambda_i - \lambda_i|,
% \end{equation}
Substituting~\eqref{eq16} into \eqref{eq15}, we get 
\begin{equation}\label{eq:simplified}
    \hat \lambda _j  = \frac{{\sum\limits_{a = 1}^N {\lambda _a  \times\frac{{d_{\max } }}{{d_{j,a}^n }} } }}{{\sum\limits_{a = 1}^N {\frac{{d_{\max } }}{{d_{j,a}^n}}}}}= \frac{{\sum\limits_{a=1}^N \lambda_a \frac{{1}}{{d_{j,a}^n}}}}{{\sum\limits_{a=1}^N \frac{{1}}{{d_{j,a}^n}}}}.
\end{equation}

The $n$-dependent term of~\eqref{eq:simplified}, $\sum_{a=1}^N \dfrac{{1}}{{d_{j,a}^n}}$, is increasingly dominated by the terms with the smallest values of $d^n_{j,a}$ as $n$ increases, that is, the weighting factor, $w_{i,j}=\dfrac{{d_{\max}}}{{d_{j,a}^n}}$, heavily favors smaller distances and diminishes the influence of larger distances.
As $n$ goes to infinity, i.e., $\lim_{n \to \infty}\sum_{a=1}^N{\frac{{d_{\max}}}{{d_{j,a}^n}}}$ , the effect of larger distances (i.e., $d_{j,a}>1$) on the estimation approaches to zero, while smaller distances (i.e., $d_{j,a}<1$) have significant effects, such that 
\begin{equation}\label{eq:limit_to_infinity}
    \lim_{n \to \infty}\frac{{1}}{{d_{j,a}^n}} =\begin{cases}
        0, & d_{j,a}>1,\\
        1, & d_{j,a}=1,\\
        \infty, & d_{j,a}<1,
    \end{cases}
\end{equation}
indicating that while the weights of smaller distances increase infinitely, the weights of larger distances diminish, i.e., have no impact on the estimation process.
Furthermore, it can be observed from \eqref{eq:simplified} and \eqref{eq:limit_to_infinity} that the influence on the estimation process becomes increasingly concentrated at even closer distances as $n$ increases; in other words, the range of close distances narrows with growing $n$.
\begin{prop}\label{prop:spatial}
The BSs in closer proximity have more correlated traffic patterns.
\end{prop}
\begin{proof}
This kind of statements usually require an empirical proof. 
In this regard, the authors in~\cite{spatial-proof} found that the traffic patterns of the closer BSs in terms of distance are correlated to each other. 
Similarly, the study in~\cite{spatial-2} demonstrated a spatial correlation in BS traffic patterns.
Additionally, the work in~\cite{spatial-milan-2}, which specifically analyzed the same dataset as our study, though not explicitly stated, suggests from their analyses that the traffic of closer cells exhibits greater similarity.
\end{proof}

Considering the results derived from \eqref{eq:simplified} and Proposition \ref{prop:spatial}, it becomes evident that as the estimation process is increasingly influenced by cells closer to the cell-in-question (as indicated by the growing value of $n$), the estimation error, $\epsilon$, given by $\epsilon = |\hat\lambda_j - \lambda_j|=\left|\dfrac{{\sum\limits_{a = 1}^N {\lambda _a  \times\frac{{d_{\max } }}{{d_{j,a}^n }} } }}{{\sum\limits_{a = 1}^N {\frac{{d_{\max}}}{{d_{j,a}^n }} } }} - \lambda_j\right|$ tends to be lower, since the estimation has more correlated components.
\end{proof}

The parameter $n$ plays a crucial role in the estimation process, such that the increasing values of $n$ make the estimation more immune to the growing number of SBSs, $N$, in the network.

\begin{corollary}\label{cor:moreN}
 For a fixed value of $n$, the growing values of $N$ make the traffic load estimation worse, such that the estimation error, $\epsilon$ increases with $N$.   
\end{corollary}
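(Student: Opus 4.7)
The plan is to leverage the structure of the weighted-average estimator in~\eqref{eq:simplified} together with Proposition~\ref{prop:spatial}, which states that traffic patterns decorrelate with distance. Since the neighbors are indexed incrementally by proximity (see Section~\ref{sec:method}), enlarging $N$ to $N+1$ is equivalent to appending a cell whose distance $d_{j,N+1}$ is at least as large as any previously included distance. I would therefore study the estimator as a function of $N$ and characterize how each newly absorbed term perturbs it.

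First, I would write the estimator in incremental form. Let $W_N = \sum_{a=1}^{N} 1/d_{j,a}^n$ and $S_N = \sum_{a=1}^{N} \lambda_a / d_{j,a}^n$, so that $\hat\lambda_j(N) = S_N/W_N$. A short algebraic manipulation yields the one-step update
\begin{equation}
\hat\lambda_j(N+1) - \hat\lambda_j(N) = \frac{1/d_{j,N+1}^n}{W_{N+1}}\bigl(\lambda_{N+1} - \hat\lambda_j(N)\bigr),
\end{equation}
so each new cell shifts the estimate toward its own load by an amount proportional to its weight. For fixed $n$ and finite $d_{j,N+1}$, the weight $1/d_{j,N+1}^n$ is strictly positive, so the estimate strictly moves unless $\lambda_{N+1}$ coincides with the current estimate.

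Next, I would connect this perturbation to the estimation error $\epsilon = |\hat\lambda_j - \lambda_j|$. By Proposition~\ref{prop:spatial}, the expected discrepancy $\mathbb{E}\!\left[|\lambda_{N+1} - \lambda_j|\right]$ grows with $d_{j,N+1}$, because distant cells exhibit less correlated traffic. Combined with the fact that the estimate is pulled toward these increasingly uncorrelated values, the contribution of each additional cell is, in expectation, a pull away from the true $\lambda_j$ rather than toward it. Summing these perturbations from $a=1$ up to the current $N$ and taking expectations then yields a monotone envelope showing that $\mathbb{E}[\epsilon]$ is non-decreasing in $N$, which is the claim of the corollary when interpreted in its natural statistical sense.

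The main obstacle, and the place where care is needed, is precisely this deterministic-versus-expected distinction: on a single realization, a distant cell can by chance satisfy $\lambda_{N+1} \approx \lambda_j$ and temporarily reduce the error, so strict pointwise monotonicity cannot hold. The hard part of the proof is therefore to formalize the correlation-decay hypothesis of Proposition~\ref{prop:spatial} as a quantitative statement (for instance, that $\mathbb{E}[(\lambda_a - \lambda_j)^2]$ is non-decreasing in $d_{j,a}$) and to show that the weighted sum of these expected pulls outweighs any cancellation, thereby delivering monotonicity of $\mathbb{E}[\epsilon]$ in $N$ for any fixed $n$.
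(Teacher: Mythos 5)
Your proposal rests on the same core idea as the paper's proof, which is a single sentence: increasing $N$ admits cells that are farther from the sleeping SBS, and by Proposition~\ref{prop:spatial} these are less correlated with $\lambda_j$, so the estimate degrades. Where you go beyond the paper is in making this mechanism concrete: the one-step update
$\hat\lambda_j(N+1)-\hat\lambda_j(N)=\frac{1/d_{j,N+1}^{n}}{W_{N+1}}\bigl(\lambda_{N+1}-\hat\lambda_j(N)\bigr)$
is algebraically correct and shows exactly how each newly admitted cell drags the estimate toward its own (increasingly decorrelated) load, and you correctly observe that the corollary cannot hold pointwise on a single realization but only in expectation, which requires upgrading Proposition~\ref{prop:spatial} to a quantitative statement such as $\mathbb{E}[(\lambda_a-\lambda_j)^2]$ non-decreasing in $d_{j,a}$. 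The paper does none of this; it asserts the conclusion directly from the qualitative correlation claim. Your added caveats are not pedantry --- they expose a real gap in the paper's own argument (averaging over more cells also reduces estimator variance, so monotonicity of $\mathbb{E}[\epsilon]$ in $N$ is not automatic and genuinely needs the quantitative hypothesis you identify). In short: same route, but your version is an honest, more rigorous rendering of it, and it correctly names the missing assumption that would be needed to close the argument; the empirical trend in Fig.~\ref{fig-2} is what the paper ultimately leans on.
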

\begin{proof}
As the number of SBSs, $N$, increases, more BSs that are farther from the cell-in-question contribute to the cell load estimation process, which, in turn, boosts the cell load estimation error, $\epsilon$.
\end{proof}

\begin{lemma}
The deviation in the estimation error with increasing values of $N$ reduces with the growing values of $n$.
\label{lemma1}
\end{lemma}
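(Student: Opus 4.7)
The plan is to treat the dependence of $\hat\lambda_j$ on $N$ as an incremental process: add one additional neighboring cell at a time and quantify how much $\hat\lambda_j$ (and therefore $\epsilon$) can shift. Concretely, I would define the partial weight sum $W_N = \sum_{a=1}^N 1/d_{j,a}^n$ and, using the expression in \eqref{eq:simplified}, rewrite
\begin{equation}
\hat\lambda_j^{(N+1)} = \frac{W_N \, \hat\lambda_j^{(N)} + \lambda_{N+1}/d_{j,N+1}^n}{W_N + 1/d_{j,N+1}^n},
\end{equation}
which exhibits $\hat\lambda_j^{(N+1)}$ as a convex combination of $\hat\lambda_j^{(N)}$ and the newly incorporated sample $\lambda_{N+1}$. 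This gives the clean upper bound
\begin{equation}
\bigl|\hat\lambda_j^{(N+1)} - \hat\lambda_j^{(N)}\bigr| = \frac{|\lambda_{N+1} - \hat\lambda_j^{(N)}|}{W_N d_{j,N+1}^n + 1} \le \frac{1}{W_N d_{j,N+1}^n + 1},
\end{equation}
since the load factors are bounded in $[0,1]$.

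Next, I would exploit the fact that cells are enumerated in order of increasing proximity (as stated in Section~IV-A), so $d_{j,N+1} \ge d_{j,a}$ for every $a \le N$. This yields $W_N d_{j,N+1}^n \ge \sum_{a=1}^N (d_{j,N+1}/d_{j,a})^n$, and each ratio in the sum is $\ge 1$, with at least one strictly greater than $1$ whenever a new cell is genuinely farther. Consequently this lower bound grows without limit in $n$, which forces the bound on $|\hat\lambda_j^{(N+1)} - \hat\lambda_j^{(N)}|$ to shrink monotonically to zero as $n$ increases. Because $\epsilon = |\hat\lambda_j - \lambda_j|$ depends on $N$ only through $\hat\lambda_j$, the triangle inequality immediately transfers this shrinking increment to $|\epsilon^{(N+1)} - \epsilon^{(N)}|$, which is the precise sense in which the deviation of the estimation error across $N$ is damped by larger $n$.

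Finally, I would close by telescoping: any cumulative deviation $|\epsilon^{(N+m)} - \epsilon^{(N)}|$ is bounded by the sum of the per-step increments, and each summand is controlled by the bound above. Combined with Theorem~\ref{theorem2} and Corollary~\ref{cor:moreN}, which already establish the monotonicity structure, this shows that while $\epsilon$ still drifts upward with $N$ for a fixed $n$, the magnitude of that drift is suppressed by growing $n$, proving the claim.

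The main obstacle will be interpreting ``deviation'' rigorously. The statement is phrased informally, so I would explicitly adopt the interpretation of successive-increment magnitudes $|\epsilon^{(N+1)} - \epsilon^{(N)}|$ (equivalently, a discrete finite-difference sensitivity of $\epsilon$ with respect to $N$), since this is both the natural discrete analogue of $\partial\epsilon/\partial N$ and the quantity that the convex-combination argument above controls cleanly. A secondary subtlety is handling the boundary case $d_{j,a} < 1$, where the weight $1/d_{j,a}^n$ blows up with $n$; this is actually favorable, as it makes $W_N$ grow even faster and tightens the bound further, but it must be acknowledged to ensure the argument is uniform in the spatial configuration.
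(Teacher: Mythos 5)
Your proof is correct but takes a genuinely different, and considerably more quantitative, route than the paper's. The paper argues at the level of signs only: it invokes Corollary~\ref{cor:moreN} (error grows with $N$) and Theorem~\ref{theorem2} (error shrinks with $n$) and concludes that the two effects offset one another, summarizing this as $\frac{\mathrm{d}\epsilon(N)}{\mathrm{d}N}\cdot\frac{\mathrm{d}\epsilon(n)}{\mathrm{d}n}<0$ --- essentially a qualitative cancellation argument with no bound on the magnitude of the damping. You instead make ``deviation'' precise as the finite-difference sensitivity $|\epsilon^{(N+1)}-\epsilon^{(N)}|$, exhibit $\hat\lambda_j^{(N+1)}$ as a convex combination of $\hat\lambda_j^{(N)}$ and the new sample $\lambda_{N+1}$, and derive the explicit bound $1/(W_N d_{j,N+1}^n+1)$, which the proximity ordering of the cells forces to be nonincreasing in $n$ and to vanish whenever the new cell is strictly farther than some earlier one. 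Combined with the reverse triangle inequality, this transfers directly to $\epsilon$ and yields an actual rate of damping rather than a sign statement, and it correctly isolates the one degenerate case (all cells equidistant) where no damping occurs. The paper's version is shorter and consistent with its informal derivative notation, but yours is the argument that would survive scrutiny. Two small caveats: your telescoped cumulative bound behaves like a harmonic tail for fixed $n$, so uniform control over many added cells genuinely requires $n$ to grow (which is consistent with, and sharpens, the lemma); and your argument controls the sensitivity of $\hat\lambda_j$ to $N$ purely through the weights, without needing Proposition~\ref{prop:spatial}, whereas the paper's chain of reasoning leans on that empirical correlation claim throughout.
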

\begin{proof}

Corollary~\ref{cor:moreN} states that the increasing values of $N$ tend to increase the traffic load estimation error, $\epsilon$, such that $\dfrac{\text{d}\epsilon (N)}{\text{d}N}>0$.
On the other hand, Theorem~\ref{theorem2} indicates that the estimation error, $\epsilon$, reduces with growing values of $n$, i.e., $\dfrac{\text{d}\epsilon(n)}{\text{d}n}<0$.
When the effects of increasing values of $N$ and $n$ are combined, the effect of increasing $N$ is mitigated with the effect of increasing $n$, such that $\dfrac{\text{d}\epsilon (N)}{\text{d}N}\cdot \dfrac{\text{d}\epsilon(n)}{\text{d}n}<0$, and thereby if both $N$ and $n$ grow together, the deviation in the estimation error reduces.
\end{proof}

\subsection{Random Cell Selection Traffic Load Estimation}
This approach utilizes a random selection of surrounding cells for traffic load estimation. It comprises two variations: random selection without weighting and random selection with weighting.
\subsubsection{Random Selection without Weighting}
The traffic load of a sleeping SBS is estimated based on the average traffic load of randomly selected surrounding SBSs. The estimation is calculated using a formula similar to \eqref{eq14}, where the selection of neighboring cells, $a$, is random.
\subsubsection{Random Selection with Weighting}
This variation applies a weighting mechanism to the randomly selected surrounding cells, the same as \eqref{eq15}, with the selection of $a$ being random. The weighting factors vary based on the distance to the sleeping SBS, enhancing the accuracy of the traffic load estimation. 
\subsection{Clustering-Based Traffic Load Estimation}
This approach involves clustering SBSs based on their traffic patterns and estimating the traffic load of a sleeping SBS using the average load of active SBSs within the same cluster. 
We employ the $k$-means algorithm, an unsupervised machine learning technique, for clustering the SBSs. 
The number of clusters, a crucial hyper-parameter in the $k$-means algorithm, is determined using the elbow method \cite{article}, which
assesses different cluster numbers by calculating the sum of squared errors (SSE) between data points and centroids for each potential cluster count. The SSE is given by \cite{article}
\begin{equation} 
SSE = \sum\limits_{g = 1}^G {\sum\limits_{x_m\in \kappa_g} {(x_m - \varkappa_g )} } ^2,
\label{eq18}
\end{equation}
where $G$ represents the optimal number of clusters, $\varkappa_g$ the centroid of each cluster $\kappa_g$, and $x_m$ each sample in $\kappa_g$. 
The optimal cluster number is identified at the point where the SSE curve forms an "elbow" before flattening.
\subsubsection{Multi-level Clustering-Based Traffic Load Estimation}
The multi-level clustering (MLC) approach involves repeated clustering of SBSs based on their traffic patterns to estimate the traffic load of offloaded SBSs. 
This method employs the elbow method to determine the optimum number of clusters, followed by the application of the $k$-means algorithm for clustering. The traffic load of a sleeping SBS is then estimated based on the average load of active SBSs in the same cluster. This iterative approach, as outlined in Algorithm 1, ensures progressively refined clustering with each layer, leading to more precise traffic load estimations for sleeping SBSs.
\vspace{-2mm}
\begin{algorithm}
\caption{Multi-Level Clustering (MLC) using $k$-means}
\DontPrintSemicolon  % Removes semicolons
\SetAlgoLined  % Enables vertical lines
\small
\KwData{Traffic loads of SBSs $\lambda_{a}$, maximum number of layers $L$}
\KwResult{Clustered SBSs with estimated traffic loads}

\SetKwFunction{FMain}{MLC\_k\_means}
\SetKwProg{Fn}{Procedure}{:}{}
\Fn{\FMain{$\lambda$, $L$}}{
    Determine the optimal number of clusters $G$ using the elbow method\;
    Initialize layer count $l = 1$\;
    \While{$l \leq L$}{
        Perform $k$-means clustering on $\lambda$ to form $G$ clusters\;
        \For{cluster $\kappa_g$}{
            Calculate the mean traffic load $\mu_\text{m}$\;
            \For{sleeping SBS in $\kappa_g$}{
                Estimate the traffic load as $\mu_\text{m}$\;
            }
        }
        Update $\lambda$ with estimated ones for sleeping SBSs\;
        Increment the layer count $l$ by 1\;
    }
    \Return The final clusters with estimated traffic loads\;
}
\label{alg1}
\end{algorithm}
\vspace{-0.5cm}
\section{Performance Evaluation}\label{sec:performance}
This section evaluates the effectiveness of our proposed traffic load estimation methods for sleeping SBSs using the Milan dataset detailed in Section II-C. Simulation parameters are provided in Table~\ref{table:nonl}. We average monthly traffic per time slot for each SBS and apply our algorithms across random SBS selections for each iteration.
\subsection{Performance Metric}
Three different metrics are utilized to evaluate the performance: estimation error, network power consumption, and decision change rate. \textit{Estimation error} is defined as the ratio of the difference between the actual and estimated traffic load to the actual traffic load. \textit{Network power consumption} values are derived from \eqref{eq7} and \eqref{eq8} for scenarios with true and estimated values, respectively. \textit{Decision change} represents the discrepancy between the actual and estimated state vectors. This metric reflects the impact on users' QoS, indicating that any change in the state vector due to cell switching results in a reduction of QoS, as user association under normal circumstances, where all SBSs are active, is optimized.
\begin{centering}
\begin{table}[t!] 
\caption{SIMULATION PARAMETERS}
\centering % used for centering table 
\begin{tabular}{|c|c|} % centered columns 
  \hline % inserts single horizontal line
Parameters&Value
\\ 
\hline
Number of SBSs&5000
\\
Number of time slots&144
\\
Time slot duration&10 m
\\
Number of days&60
\\
Number of iteration&300
\\
Optimal $G$ using elbow method&3
\\
\hline
    \end{tabular} \label{table:nonl} % is used to refer this table in the text 
    \vspace{-1.5em}
    \end{table}
    \end{centering}
\subsection{Simulation Results}
Figure \ref{fig-2} presents the relationship between the traffic load estimation error and the number of neighboring SBSs for the distance-based with weighting estimation method. 
The analysis, conducted for different powers of distance, \(n\), in \eqref{eq16} specifically $n=$ 1, 3, 5, and 10, provides key insights that are consistent with our theoretical findings:
\begin{itemize}
\item As Corollary \ref{cor:moreN} suggests, an increase in the number of neighboring SBSs (\(N\)) elevates estimation errors.
\item Consistent with Theorem \ref{theorem2}, an increase in \(n\) significantly reduces estimation errors. The errors decrease from 45\% at \(n = 1\) to around 15\% at \(n = 5\), and become minimal at \(n = 10\), highlighting the benefits of prioritizing proximity in weight calculations.
\item This pattern aligns with Proposition \ref{prop:spatial}, affirming that traffic loads from closer BSs are more correlated and thus provide more reliable estimations.
\end{itemize}

\begin{figure}[t]
\centerline{\includegraphics[width = 6.7cm ]{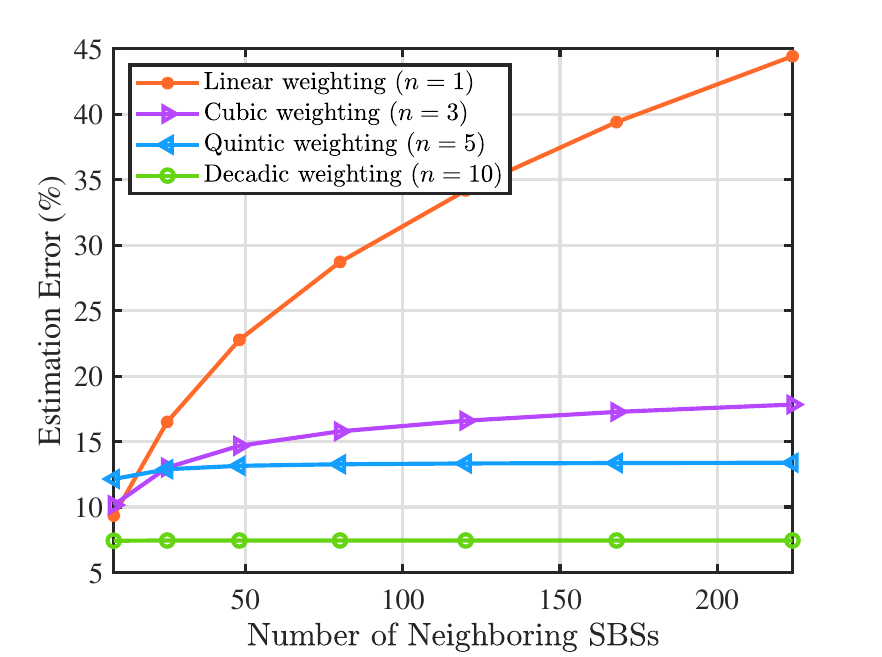}}
\vspace{-2mm}
\caption{Estimation error for distance-based cell selection with weighting.}
\label{fig-2}
 \vspace{-1em}
\end{figure}
\begin{figure}[h]
\centerline{\includegraphics[width = 6.7cm ]{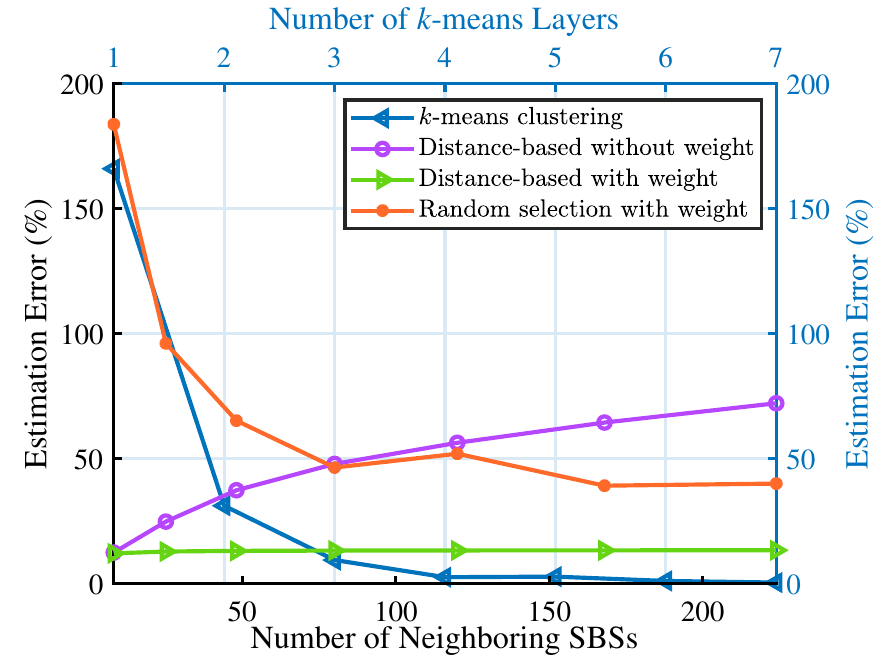}}
\vspace{-3mm}
\caption{The estimation error for different methods. Two $x$- and two $y$-axes are considered: the blue one is for the clustering-based while the black one is for the rest of the approaches.}
\label{fig-3}
\vspace{-1em}
\end{figure}
Figure \ref{fig-3} evaluates the estimation error across different methods. With an increase in the number of layers in $k$-means clustering, the homogeneity within clusters improves, leading to a substantial reduction in estimation errors—approaching near 0\% with seven layers. The distance-based method without weighting shows a rising trend in errors as the number of neighboring cells increases, emphasizing the relevance of incorporating distance in estimation, as suggested by Theorem \ref{theorem2} and Proposition \ref{prop:spatial} which highlight the importance of closer proximities in reducing errors. In contrast, the performance of the distance-based method with weighting, particularly at $n=10$, shows minimal errors, as extensively analyzed in Fig. \ref{fig-2}.
The random selection method with weighting shows decreased errors due to prioritizing closer neighbors, similar to the effects described in Theorem \ref{theorem2}. The random selection method without weighting is excluded due to its consistently high and unstable errors, highlighting the importance of distance consideration in neighbor selection as reinforced by Lemma~\ref{lemma1}.

Figure~\ref{fig-4} depicts the network power consumption as a function of the number of SBSs, $s$, under both actual and estimated traffic loads, the latter obtained via $k$-means clustering with different numbers of layers, $L$. 
The results demonstrate that network power consumption increases with $s$. Furthermore, increasing $L$ leads to a closer alignment of estimated power consumption with the actual values, indicating improved accuracy in traffic load estimation with more layers in the MLC process.
\begin{figure}[t]
\centerline{\includegraphics[width = 6.7cm ]{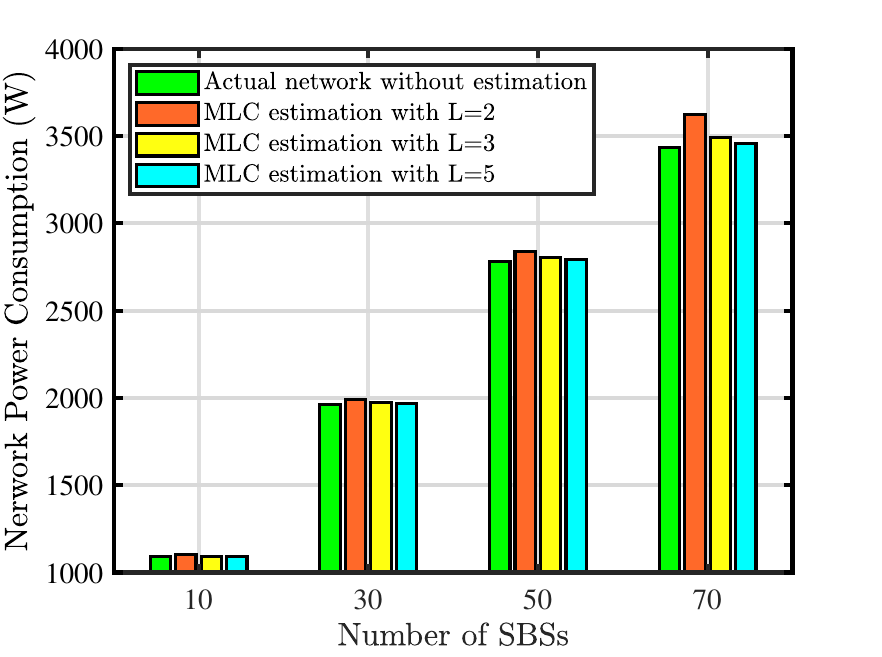}}
\vspace{-3mm}
\caption{Network power consumption for different number of layers in MLC.}
\label{fig-4}
\vspace{-0.5em}
\end{figure}

\begin{figure}[h]
\centerline{\includegraphics[width = 6.7cm ]{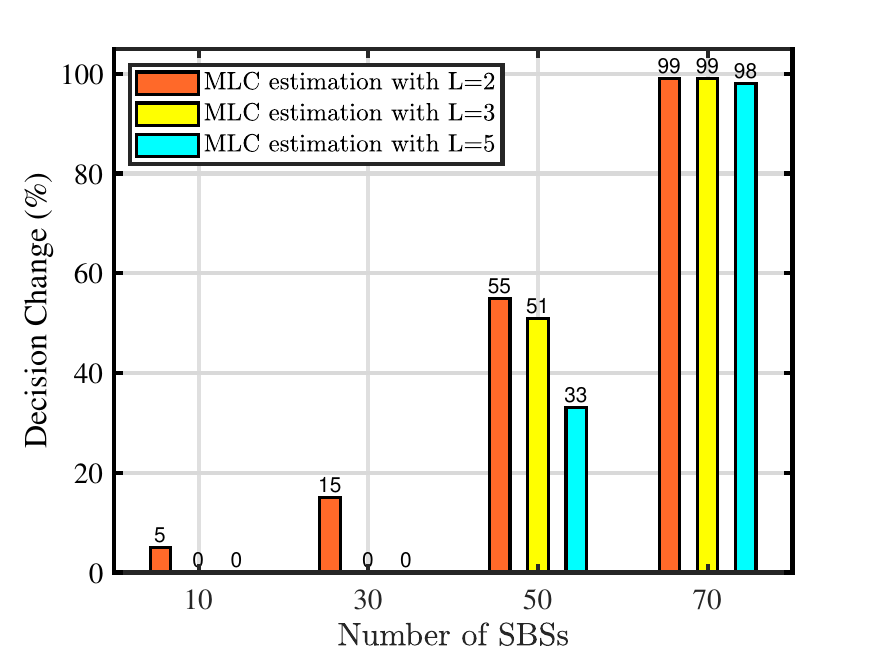}}
\vspace{-3mm}
\caption{Decision change for different number of layers in MLC.}
\label{fig-5}
\vspace{-1em}
\end{figure}
Figure \ref{fig-5} shows the decision change rate between actual and estimated state vectors using $k$-means clustering with various layers. The data shows that the state vector variation increases with the number of SBSs. Adding more layers in the MLC process reduces this discrepancy when the number of SBSs is small. However, in densely populated vHetNets, such as with 70 SBSs, the decision change rate remains high at 98\%-99\%, regardless of the number of layers. While these decision changes have minimal impact on network power consumption due to the efficient offloading algorithm in~\cite{10304250}, they significantly affect user QoS.

It is important to note that while the results are based on the Milan dataset, we believe these findings offer generic insights applicable across various network scenarios, despite potential dataset-specific limitations. This emphasizes the broad relevance of our methodologies.
\section{Conclusion}\label{sec:conclusion}
This study addresses the challenge of estimating traffic loads for SBSs in sleep mode within vHetNets, a significant barrier to optimizing power consumption feasibly and practically. 
After developing mathematical frameworks to explain the behaviors of the employed spatial interpolation methods, we explored their potential to bridge the data gap for cell switching strategies and enhance power consumption management.
Validated with the Milan dataset, these methods have proven effective in accurately estimating traffic loads, enabling the implementation of more efficient energy-saving strategies.
Our approach underscores the potential for significant improvements in network sustainability.

\ifCLASSOPTIONcaptionsoff
  \newpage
\fi
\bibliographystyle{IEEEtran}
\small
\bibliography{IEEEabrv,Bibliography}

% Generated by IEEEtran.bst, version: 1.14 (2015/08/26)
\begin{thebibliography}{10}
\providecommand{\url}[1]{#1}
\csname url@samestyle\endcsname
\providecommand{\newblock}{\relax}
\providecommand{\bibinfo}[2]{#2}
\providecommand{\BIBentrySTDinterwordspacing}{\spaceskip=0pt\relax}
\providecommand{\BIBentryALTinterwordstretchfactor}{4}
\providecommand{\BIBentryALTinterwordspacing}{\spaceskip=\fontdimen2\font plus
\BIBentryALTinterwordstretchfactor\fontdimen3\font minus \fontdimen4\font\relax}
\providecommand{\BIBforeignlanguage}[2]{{%
\expandafter\ifx\csname l@#1\endcsname\relax
\typeout{** WARNING: IEEEtran.bst: No hyphenation pattern has been}%
\typeout{** loaded for the language `#1'. Using the pattern for}%
\typeout{** the default language instead.}%
\else
\language=\csname l@#1\endcsname
\fi
#2}}
\providecommand{\BIBdecl}{\relax}
\BIBdecl

\bibitem{3333333}
M.~Feng, S.~Mao, and T.~Jiang, ``Base station on-off switching in {5G} wireless networks: Approaches and challenges,'' \emph{IEEE Wireless Communications}, vol.~24, pp. 46--54, Aug. 2017.

\bibitem{10304250}
M.~Salamatmoghadasi, A.~Mehrabian, and H.~Yanikomeroglu, ``Energy sustainability in dense radio access networks via high altitude platform stations,'' \emph{IEEE Networking Letters (Early Access)}, pp. 1--1, Nov. 2023.

\bibitem{HETS2023P}
T.~{Song}, D.~{Lopez}, M.~{Meo}, N.~{Piovesan}, and D.~{Renga}, ``{High altitude platform stations: the new network energy efficiency enabler in the 6G Era},'' \emph{arXiv e-prints}, p. arXiv:2307.00969, Jul. 2023.

\bibitem{ELAA2022JR}
A.~E. Amine, J.-P. Chaiban, H.~A.~H. Hassan, P.~Dini, L.~Nuaymi, and R.~Achkar, ``Energy optimization with multi-sleeping control in {5G} heterogeneous networks using reinforcement learning,'' \emph{IEEE Transactions on Network and Service Management}, vol.~19, no.~4, pp. 4310--4322, Mar. 2022.

\bibitem{EOMK2017JR}
\BIBentryALTinterwordspacing
M.~W. Kang and Y.~W. Chung, ``An efficient energy saving scheme for base stations in {5G} networks with separated data and control planes using particle swarm optimization,'' \emph{Energies}, vol.~10, no.~9, Sep. 2017. [Online]. Available: \url{https://www.mdpi.com/1996-1073/10/9/1417}
\BIBentrySTDinterwordspacing

\bibitem{Metin_VFA_CellSwitch}
M.~Ozturk, A.~I. Abubakar, J.~P.~B. Nadas, R.~N.~B. Rais, S.~Hussain, and M.~A. Imran, ``Energy optimization in ultra-dense radio access networks via traffic-aware cell switching,'' \emph{IEEE Transactions on Green Communications and Networking}, vol.~5, no.~2, pp. 832--845, Feb. 2021.

\bibitem{itu_vision_november}
``{Framework and overall objectives of the future development of IMT for 2030 and beyond},'' Recommendation ITU-R M.2160-0, Tech. Rep., Nov. 2023.

\bibitem{itu_vision_june_23}
\BIBentryALTinterwordspacing
``{The ITU-R Framework for IMT-2030},'' ITU-R Working Party 5D, International Telecommunication Union (ITU), Tech. Rep., Jul. 2023. [Online]. Available: \url{https://www.itu.int/en/ITU-R/study-groups/rsg5/rwp5d/imt-2030/Pages/default.aspx}
\BIBentrySTDinterwordspacing

\bibitem{DVN/EGZHFV_2015}
\BIBentryALTinterwordspacing
T.~Italia, ``{Telecommunications - SMS, Call, Internet - MI},'' 2015. [Online]. Available: \url{https://doi.org/10.7910/DVN/EGZHFV}
\BIBentrySTDinterwordspacing

\bibitem{9380673}
G.~Karabulut~Kurt, M.~G. Khoshkholgh, S.~Alfattani, A.~Ibrahim, T.~S.~J. Darwish, M.~S. Alam, H.~Yanikomeroglu, and A.~Yongacoglu, ``A vision and framework for the high altitude platform station {(HAPS)} networks of the future,'' \emph{IEEE Communications Surveys \& Tutorials}, vol.~23, no.~2, pp. 729--779, 1st Quart., 2021.

\bibitem{6056691}
G.~Auer, V.~Giannini, C.~Desset, I.~Godor, P.~Skillermark, M.~Olsson, M.~A. Imran, D.~Sabella, M.~J. Gonzalez, O.~Blume, and A.~Fehske, ``How much energy is needed to run a wireless network?'' \emph{IEEE Wireless Communications}, vol.~18, no.~5, pp. 40--49, Oct. 2011.

\bibitem{7925662}
H.~Wu, X.~Xu, Y.~Sun, and A.~Li, ``Energy efficient base station on/off with user association under {C/U} split,'' \emph{2017 IEEE Wireless Communications and Networking Conference}, pp. 1--6, May 2017.

\bibitem{spatial-proof}
S.~Zhou, D.~Lee, B.~Leng, X.~Zhou, H.~Zhang, and Z.~Niu, ``On the spatial distribution of base stations and its relation to the traffic density in cellular networks,'' \emph{IEEE Access}, vol.~3, pp. 998--1010, Jul. 2015.

\bibitem{spatial-2}
X.~Wang, Z.~Zhou, F.~Xiao, K.~Xing, Z.~Yang, Y.~Liu, and C.~Peng, ``Spatio-temporal analysis and prediction of cellular traffic in metropolis,'' \emph{IEEE Transactions on Mobile Computing}, vol.~18, no.~9, pp. 2190--2202, Sep. 2019.

\bibitem{spatial-milan-2}
M.~N. Rafiq, H.~Farooq, A.~Zoha, and A.~Imran, ``Can temperature be used as a predictor of data traffic? a real network big data analysis,'' \emph{IEEE/ACM 5th International Conference on Big Data Computing Applications and Technologies (BDCAT)}, pp. 167--173, Dec. 2018.

\bibitem{article}
H.~Zhao, ``Research on improvement and parallelization of k-means clustering algorithm,'' \emph{IEEE 3rd International Conference on Frontiers Technology of Information and Computer (ICFTIC)}, pp. 57--61, Dec. 2021.

\end{thebibliography}

\vfill
\end{document}